\newcommand{\xkh}[1]{\left(#1\right)}
\newcommand{\dkh}[1]{\left\{#1\right\}}
\newcommand{\norm}[1]{\|{#1}\|_2}
\newcommand{\abs}[1]{\left\lvert#1\right\rvert}
\newcommand{\argmin}[1]{\mathop{\rm argmin}\limits_{#1}}
\newcommand{\T}{\top}
\newcommand{\C}{{\mathbb C}}
\newcommand{\hx}{{\widehat{\bm{x}}}}
\newcommand{\z}{{\widehat{\bm{z}}}}
\newcommand{\vx}{{\bm x}}
\newcommand{\x}{{\widehat{\bm{x}}}}
\newcommand{\vr}{{ \bm{ r}}}
\newcommand{\vz}{{\bm z}}
\newcommand{\Z}{{\mathbb Z}}
\newcommand{\wb}[1]{\overline{#1}}
\newcommand{\RNum}[1]{\uppercase\expandafter{\romannumeral #1\relax}}
\newtheorem{definition}{Definition}[section]
\newtheorem{theorem}[definition]{Theorem}
\newtheorem{lemma}[definition]{Lemma}
\newtheorem{remark}[definition]{Remark}
\date{}
\begin{document}
\baselineskip 19pt
\bibliographystyle{plain}
\title{No existence of linear algorithm for Fourier phase retrieval}
\author{Meng huang}
\address{School of Mathematical Sciences, Beihang University, Beijing, 100191, China} \email{menghuang@buaa.edu.cn}

\author{Zhiqiang Xu}
\thanks{{{Zhiqiang Xu is supported by the National Science
Fund for Distinguished Young Scholars (12025108) and NSFC (12021001)}}}
\address{LSEC, ICMSEC, Academy of Mathematics and Systems Science, Chinese Academy of Sciences, Beijing 100190, China;\newline
School of Mathematical Sciences, University of Chinese Academy of Sciences, Beijing 100049, China}
\email{xuzq@lsec.cc.ac.cn}

\maketitle
\begin{abstract}
Fourier phase retrieval, which seeks to reconstruct a signal from its Fourier magnitude, is of fundamental importance in fields of engineering and science. In this paper, we give a theoretical understanding of algorithms for Fourier phase retrieval. Particularly, we show if   there exists an algorithm which could reconstruct an arbitrary  signal $\vx\in \C^N$ in  $ \mbox{Poly}(N) \log(1/\epsilon)$ time to reach $\epsilon$-precision from its magnitude of discrete Fourier transform and its initial value $x(0)$, then $\mathcal{ P}=\mathcal{NP}$.  This demystifies the phenomenon that, although almost all signals are determined uniquely by their Fourier magnitude with a prior conditions,  there is no algorithm with theoretical guarantees being proposed over the past few decades.
Our proofs employ the  result in computational complexity theory that Product Partition problem is NP-complete in the strong sense.
\end{abstract}

\section{Introduction}

\subsection{Problem setup}
Let $\vx:=(x(0),\ldots, x(N-1))^\T \in \C^N$ be the underlying signal.  The Fourier phase retrieval problem aims to reconstruct the signal $\vx$ from the magnitude of its discrete Fourier transform
\begin{equation} \label{eq:probset}
|\hx(\omega)|:=\abs{\sum_{n=0}^{N-1} x(n) e^{-i\omega n}}, \quad \omega \in [0,2\pi).
\end{equation}
This problem can also be understood as the reconstruction of  $\vx$ from the autocorrelation signal $\vr=(r(n))_{n\in \Z}$ where
\begin{equation*}
r(n):=\sum_{k=0}^{N-1} \overline{x(k)}  x(k+n), \quad n=-N+1,\ldots,N+1
\end{equation*}
with $x(k)=0$ for $k<0$ and $k\ge N$. Here, $\overline{x(k)}$ is the conjugate  of $x(k)$. The Fourier phase retrieval problem has a rich history tracing back to 1952 \cite{sayre1952some}.
It  has a broad range of applications  in many areas, including X-ray crystallography \cite{millane1990phase,harrison1993phase}, coherent diffraction imaging \cite{chapman2006,miao1999extending,miao1998phase},  quantum mechanics \cite{Pauli,Philoso},
 astrophysics \cite{fienup1987phase}, optics \cite{shechtman2015phase,walther1963question}, and blind channel estimation \cite{baykal2004blind}, to name just a few.

Although the Fourier phase retrieval problem is highly ill-posed in general, it has been shown theoretically that almost all the signals can be recovered uniquely if some additional information is available.  For instance, if one entry of the underlying signal is known in advances \cite{bendory2017fourier,beinert2015ambiguities} or if the underlying signal is sparse \cite{jaganathan2017sparse}, then the uniqueness of solutions is ensured for almost all signals.  Over the past few decades, many methods have been proposed for solving Fourier phase retrieval problem, such as the  Gerchberg-Saxton (GS) algorithm \cite{gerchberg1972practical}, the Fienup algorithm \cite{fienup1987reconstruction}, and the hybrid input-output (HIO) algorithm \cite{fienup1982phase}. However, the performance of those algorithms is not well-understood. Even in the absence of noise, those algorithms are not guaranteed to converge to the true solution. To the best of our knowledge, there is no method that provably recovers signals from their Fourier magnitude, even for multidimensional signals.
In this paper, we aim to give a theoretical understanding of algorithms for solving the Fourier phase retrieval problem.
Particularly, we   focus on the question:

{\em Is there any algorithm which possesses a linear convergence rate for Fourier phase retrieval, even when the exact value of the first entry of signal is known in advance?}

\subsection{Fourier phase retrieval}
Fourier phase retrieval problem, which seeks to reconstruct a signal from measurements of its Fourier magnitudes, has
been topics of intensive research over the past few decades.
Two fundamental questions of the Fourier phase retrieval involve  uniqueness of solutions and reconstruction algorithms.

 For the uniqueness, it is easy to check from \eqref{eq:probset} that the multiplication with a unimodular factor, the translation, and the conjugate reflection do not change the Fourier magnitudes. For this reason, the three fundamental forms of solutions are called trivial ambiguities, which are of minor interest. Beside the trivial ambiguities, it has been shown that there could be $2^{N-2}$ nontrivial solutions which strongly differ from the true signal \cite{bendory2017fourier,beinert2015ambiguities}.  In order to evaluate a meaningful solution of the Fourier phase retrieval, one needs to pose appropriate prior conditions to enforce the uniqueness of solutions.
 For instance, Beinert and Plonka \cite{beinert2018enforcing} proved that if  one absolute value of the true signal is known beforehand, then {\em almost} all signals can be uniquely recovered, and if the left $\lceil N/2 \rceil$ end-points are known in advance, then all signals can be uniquely determined. Another popular  way to enforce the uniqueness is to use the sparsity of the true signals. It has been shown that  \cite{jaganathan2017sparse} almost all $k$-sparse signals with $k<N$ can be uniquely recovered from their Fourier magnitudes up to rotations. However, it does not give any deterministic rule for which a sparse signal could be uniquely recovered from its Fourier magnitude. Later, Ranieri-Chebira-Lu-Vetterli proved that if a sparse signal is collision-free, then it can be uniquely recovered \cite{ranieri2013phase}. Finding the maximum collision-free set is also of independent interest in information theory \cite{bekir2007there} and discrete geometry \cite{senechal2008point}.  Others techniques to guarantee the uniqueness of solutions include the minimum phase method \cite{hayes1980signal,huang2016phase}, exploiting deterministic masks \cite{jaganathan2015phase,jaganathan2019reconstruction}, STFT measurements \cite{jaganathan2016stft,nawab1983signal,bendory2017non}, and the FROG method\cite{bendory2020signal}.
 We refer the reader to \cite{bendory2017fourier} for a recent survey and new results on the uniqueness of the Fourier phase retrieval problem.

For the reconstruction algorithms, the oldest method for recovering a signal from its Fourier magnitude  is given by Gerchberg and Saxton in 1972 \cite{gerchberg1972practical}. It is an alternating projection algorithm that iterates between the time and Fourier domains to match the measured Fourier magnitude and known constrains.  In 1982, Fienup extended the idea of the Gerchberg-Saxton method  to a variety of phase retrieval settings \cite{fienup1987reconstruction}, such as support, non-negativity and sparsity constrains. Later, a more popular algorithm, hybrid input-output method, is proposed for solving Fourier phase retrieval with additional information in the time domain \cite{gerchberg1972practical}.  More recently, algorithms based on convex relaxation techniques have been investigated for solving the Fourier phase retrieval problem, such as PhaseLift \cite{PRviamatrix,huang2016phase}.   However, there are no known optimality results for this approach.
We refer the interested reader to \cite{bauschke2003hybrid,chen2007application} for accounts of algorithms of Fourier phase retrieval. We shall emphasize that, to the best of our knowledge, all the algorithms listed above are heuristic in nature, and the theoretical understanding of those algorithms is limited. Even in the absence of noise, those algorithms are not guaranteed to converge to the true solution.

\subsection{Our contributions}
As stated before, there is a fundamental gap between the theory and algorithms for Fourier phase retrieval. Although all (or almost) signals can be determined uniquely by their Fourier magnitude with some additional information, there is no algorithm provably recovers the signals. All the existing methods are heuristic in nature and lacks the theoretical understanding.   In this paper, we give an intrinsic characterization of the algorithms for Fourier phase retrieval and show that it is impossible to propose an algorithm which possesses linear convergence rate for Fourier phase retrieval problem. For convenience, we restate the Fourier phase retrieval problem considered in this paper as follows (the details will be given in the next section):
  \begin{enumerate}
\item[] {\bf Fourier phase retrieval }:
Given $R(\omega)\geq 0, \omega\in [0,2\pi)$ and  a complex number $x_0\in \C$ ,
find
\begin{equation}\label{eq:loss0}
\vx \in \argmin{\vz=(z(0),\ldots,z(N-1))\in \C^N}  \ell(|{{\z}}(\omega)|, R(\omega))\quad {\rm s.t.}\quad z(0)=x_0.
\end{equation}
Here, $\ell(|{{\z}}(\omega)|, R(\omega))$ is a nonnegative loss function which vanishes only when $\abs{{\z}(\omega)}^2= {R(\omega)}$.
    \end{enumerate}

We assume that ${\mathcal A}:={\mathcal A}(R(\omega), x_0)$ is an algorithm for solving Fourier phase retrieval (\ref{eq:loss0}), which can produce a sequence
$\dkh{\vx_0,\vx_1,\ldots}\subset \C^N$  with $\lim\limits_{m \rightarrow \infty}\vx_m=\vx$.
Here,  $\vx$ is a solution to  (\ref{eq:loss0}). We also assume that $\vx_m$ can be obtained by $\vx_0,\ldots, \vx_{m-1}$ in polynomial time. In this paper,  we say that ${\mathcal A}(R(\omega), x_0)$  is a {\em linear algorithm } if, for arbitrary $\epsilon>0$,  it can reach $\epsilon$-precision in $\mbox{Poly}(N) \log(1/\epsilon)$ steps, namely, the estimator $\vx_m \in \C^N$ satisfies
\[
{\norm{\vx_{m}-\vx}}\leq \epsilon,
\]
provided $m \ge \mbox{Poly}(N) \log(1/\epsilon)$. Here, $\mbox{Poly}(N)$ denotes a polynomial function with respect to $N$.

For example, if the output sequence  $\{\vx_1,\vx_2,\ldots\}$ of the algorithm  ${\mathcal A}$
satisfying
\[
 {\norm{\vx_j-\vx}}\leq \rho \cdot {\norm{\vx_{j-1}-\vx}}
\]
provided $j$ is large enough, where $\rho\in (0,1)$ is a constant,
then ${\mathcal A}$ is a linear algorithm.

We next state our main result:
\begin{theorem}\label{th:main}
If there exists  a linear algorithm for solving Fourier phase retrieval, then
 $\mathcal{ P}=\mathcal{NP}$.
\end{theorem}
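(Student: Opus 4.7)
The plan is to reduce the Product Partition problem, which is NP-complete in the strong sense, to Fourier phase retrieval. Given a Product Partition instance $\{a_1, \ldots, a_n\}$ of positive integers (polynomially bounded, by strong NP-completeness), let $A := \prod_i a_i$. First test whether $A = m^2$ for some integer $m$; if not, answer ``no'' directly. Otherwise define the polynomial $P(z) := \prod_{i=1}^n (z - a_i)$ and take its coefficient vector as a signal in $\R^{n+1}$, with target Fourier magnitude $R(\omega) := |P(e^{i\omega})|^2$. The key algebraic input is the ambiguity structure of Fourier phase retrieval: up to trivial ambiguities (global complex phase and conjugate reflection), every signal $\vz$ with $|\hat z|^2 = R$ arises by flipping some subset $S \subseteq [n]$ of the roots of $P$ to their conjugate-reciprocals, giving the integer-coefficient polynomial
\[
\tilde P_S(z) := \prod_{i \in S}(a_i z - 1)\prod_{i \notin S}(z - a_i),
\]
whose first and last coefficients are $\tilde P_S(0) = (-1)^n \prod_{i \notin S} a_i$ and $\tilde P_S(n) = \prod_{i \in S} a_i$.

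Set the constraint value $x_0 := (-1)^n m$. A short case analysis over the trivial ambiguities---a global unit-modulus phase and the reflection swapping the first and last coefficients---shows that the constraint $z(0) = x_0$ is compatible with $|\hat z|^2 = R$ if and only if some $S \subseteq [n]$ satisfies $\prod_{i \notin S} a_i = m$, equivalently if and only if the Product Partition instance is feasible. Consequently, the constrained Fourier phase retrieval problem \eqref{eq:loss0} has zero optimal loss precisely when the Product Partition instance has a solution.

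To decide Product Partition using the hypothesized linear algorithm, observe that in the feasible case the global minimizer $\vx$ of \eqref{eq:loss0} coincides (up to trivial ambiguities, which here collapse to a $\pm 1$ sign and reflection) with some $\tilde P_S$, and hence has integer coefficients. Running the algorithm with $\epsilon = 1/3$ for $\mbox{Poly}(N)\log(1/\epsilon) = \mbox{Poly}(N)$ steps yields $\vx_m$ with $\norm{\vx_m - \vx} \leq \epsilon$, so coordinate-wise rounding recovers $\vx$ exactly. A polynomial-time verification then checks whether the rounded integer vector $\vy$ satisfies both $y(0) = x_0$ and the autocorrelation identity $\sum_j \overline{y(j)}\, y(j+k) = r(k)$ for $k = -n, \ldots, n$ (where $r$ is the target autocorrelation read off from $P$); this holds if and only if the Product Partition instance has a solution. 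Since strong NP-completeness forces $N$ and $A$ to be polynomial in the input size, the entire reduction runs in polynomial time, yielding $\mathcal{P} = \mathcal{NP}$.

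The main obstacle I foresee is rigorously establishing the zero-loss characterization: every $\vz$ with $|\hat z|^2 = R$ must, up to trivial ambiguities, be one of the $\tilde P_S$. This reduces to factoring the autocorrelation polynomial $B(z) := P(z)\, z^n\, \overline{P(1/\bar z)}$ into a product $Q(z)\, z^n\, \overline{Q(1/\bar z)}$, whose only degree of freedom is the pairing assignment of each reciprocal pair of roots, yielding exactly the $\tilde P_S$. A secondary subtlety is the complex global-phase ambiguity in $\C^N$: the algorithm could a priori converge to $e^{i\theta}\tilde P_S$, but the real constraint $z(0) = x_0$ together with $|\tilde P_S(0)| = m$ in the feasible case pins $e^{i\theta} \in \{\pm 1\}$, preserving the integer structure that the rounding step exploits.
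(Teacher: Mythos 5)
Your reduction starts from the same source (Product Partition, NP-complete in the strong sense) and builds essentially the same hard instance as the paper: the reciprocal root pairs of the autocorrelation encode the integers, and the first-entry constraint encodes the target product, so that a zero-loss constrained solution exists if and only if the partition exists. The ambiguity characterization you defer to (``every solution is a root-flip $\tilde P_S$ up to trivial ambiguities'') is exactly the paper's Lemma \ref{le:nontricha}, proved there via the factorization of $S(z)=z^{N-1}R(z)$ that you sketch, so that obstacle is not a real one. Where you genuinely diverge is in how the answer is extracted from the approximate iterate. The paper evaluates the $Z$-transform $X_m$ of the iterate at each candidate root $-u_k$ and $-u_k^{-1}$ and proves separation estimates (Lemma \ref{le:key2}) that require precision $u_{\max}^{-2N}$ --- hence genuinely uses the $\log(1/\epsilon)$ rate --- and must run a recursive clean-up step when a value $u_k$ is repeated so that both members of a pair are roots. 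You instead observe that every exact constrained solution is, up to a sign pinned down by the constraint, an integer vector, so absolute precision $1/3$ plus coordinatewise rounding recovers it exactly, after which a direct integer check of $y(0)=x_0$ and the autocorrelation identity is sound in both the feasible and infeasible cases, with no special treatment of repeated roots. This is simpler and, under the paper's absolute-error notion of precision, formally stronger: it shows that even a polynomial-time algorithm guaranteeing only constant absolute precision would yield $\mathcal{P}=\mathcal{NP}$.

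Three small points to tighten. First, discard any $a_i=1$ at the outset (otherwise the pair $(a_i,1/a_i)$ degenerates to a double root on the unit circle and the flip structure you rely on breaks down; the paper assumes $u_i\ge 2$ for the same reason). Second, it is the bit-length of $A=\prod_i a_i$, not its value, that is polynomial in the input size; the coefficients of $P$ are exponentially large in magnitude, which is harmless here precisely because you need only $O(1)$ absolute precision and polynomial-bit-length arithmetic, but the statement as written is inaccurate. Third, when the instance is infeasible the constrained argmin need not be a zero-loss point (nor need it be attained), so the algorithm's output is uncontrolled there; your argument correctly relies only on the soundness of the final verification to handle that case, but this should be stated explicitly rather than left implicit.
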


Theorem \ref{th:main} shows that there is no linear algorithm to reconstruct a signal  from its discrete Fourier transform magnitude and the initial value
provided  $\mathcal{ P}\neq \mathcal{NP}$. Therefore,  the gap between the theory and algorithms of Fourier phase retrieval can not be reduced in general. The proof of this theorem is based on the relationship between Fourier phase retrieval and Product Partition problem. Specifically, we prove that if Fourier phase retrieval can be solved by a linear algorithm then Product Partition problem can be solved in polynomial time, which contradicts the fact that Product Partition problem is NP-complete in the strong sense.

For solving the Fourier phase retrieval problem, the loss function $\ell(|{\z}|, R)$ is employed.
The loss function  $\ell(|{ \z}|, R)=\sum_{j=1}^M (|{\z}(\omega_j)|-\sqrt{R(\omega_j)})^2$ is  employed in the
Gerchberg-Saxton (GS) algorithm \cite{gerchberg1972practical}, the Fienup algorithm \cite{fienup1987reconstruction} and the hybrid input-output (HIO) algorithm \cite{fienup1982phase}, while the loss function
$\ell(|{\hat \vz}|, R)=\sum_{j=1}^M (|\hat{\vz}(\omega_j)|^2-{R(\omega_j)})^2$ is employed in the Wirtinger flow method\cite{WF}.
Here,  $\{\omega_j\}_{j=1}^M\subset [0,2\pi)$ are discrete samples.

\begin{remark}
In Theorem \ref{th:main}, we show that the Fourier phase retrieval problem cannot be solved in $ \mbox{Poly}(N) \log(1/\epsilon)$ time to reach the $\epsilon$-precision by solving \eqref{eq:loss0}. Actually, using the same arguments, we could prove that there is no hope to propose an optimization algorithm which could reach $\epsilon$-precision within  $\mbox{Poly}(N\log(1/\epsilon))$ time. Note that the SDP program can be solved by interior point methods within $ \mbox{Poly}(N) \log(1/\epsilon)$  time to reach $\epsilon$-precision \cite{SDP}. Therefore, our results demystify the phenomenon that why no optimality results have been established for the natural convex SDP relaxation of the Fourier phase retrieval problem, such as PhaseLift given in \cite{PRviamatrix,huang2016phase}.
\end{remark}

\subsection{Organization}
The paper is organized as follows.  In Section 2,  we give a complete characterization of the non-trivial ambiguities for Fourier phase retrieval and then introduce several basic terminologies in computational complexity theory. In particular, we show that  the main challenge  in Fourier phase retrieval is to decide which zero should be selected from each zero pair. In Section 3,  we give the proof of our main result, i.e.,
Theorem  \ref{th:main}.  Finally, a brief discussion is presented in Section 4.

\section{Preliminaries}

\subsection{Characterization of Fourier phase retrieval}

For Fourier phase retrieval, as given in \eqref{eq:probset},  the measurements that we obtain are
\begin{equation*}
|\hx(\omega)|:=\abs{\sum_{n=0}^{N-1} x(n) e^{-i\omega n}}, \quad \omega \in [0,2\pi).
\end{equation*}
For the convenience, we set $x(k)=0$ for $k<0$ and $k\ge N$. Then it  gives that
\[
R(\omega):=|\hx(\omega)|^2=\sum_{n\in  \Z} \sum_{k \in \Z} x(n) \wb{x(k)} e^{-i\omega (n-k)}= \sum_{n\in  \Z}  r(n) e^{-i\omega n},
\]
where $\vr=(r(n))_{n\in \Z}$ is the autocorrelation signal of $\vx$ which is defined by
\begin{equation} \label{eq:auror}
r(n):=\sum_{k=0}^{N-1} \overline{x(k)}  x(k+n), \quad  n\in \Z.
\end{equation}
Note that $r(n)=0$ for all $|n| \ge N$ and $\vr$ can be obtained directly by the inverse Fourier transform of the signal's Fourier intensity.
Therefore, Fourier phase retrieval is equivalent to the reconstruction of $\vx\in \C^N$ from its autocorrelation $\vr \in \C^{2N-1}$.

It is well-known that Fourier phase retrieval is not uniquely solvable. There exists the so-called trivial ambiguities caused by the rotation, the translation and the conjugate reflection. From a physical point of view, the trivial ambiguities are acceptable. However, beside the trivial ambiguities, there are also a series of nontrivial solutions. In order to classify those nontrivial solutions, we  reformulated the Fourier phase retrieval problem as follows: For a given real nonnegative autocorrelation polynomial
\[
R(\omega):=\sum_{n=-N+1}^{N-1} r(n) e^{-i\omega n},
\]
find all trigonometric polynomials
\[
\hx(\omega):=\sum_{n=0}^{N-1} x(n) e^{-i\omega n}
\]
such that
\[
|\hx(\omega)|^2=R(\omega).
\]

The following lemma gives a characterization of nontrivial ambiguities of Fourier phase retrieval, which can be found in \cite[Theorem 2.4]{beinert2015ambiguities}. To keep the paper self contained, we give a brief proof here.

\begin{lemma}\label{le:nontricha}
Let $R(z):=\sum_{n=-N+1}^{N-1} r(n) z^{-n}$ be the  $Z$-transform of a vector $\vr:=(r(-N+1),\ldots,r(N-1) ) \in\C^{2N-1}$ with $r(-n)=\wb{r(n)}$ for $n=0,\ldots, N-1$, and $r(N-1)\neq 0$. Suppose that $R(z) \ge 0$ for all $z\in \C$ on the unit circle. Then $R(z)$ can be written in the following form:
\begin{equation}\label{eq:RZ}
R(z)=z^{-N+1}\cdot  r(N-1)\cdot \prod_{n=1}^{N-1} (z-\gamma_n) (z- \wb{\gamma_n}^{-1})
\end{equation}
for some $\gamma_n \neq 0, n=1,\ldots, N-1$. Furthermore, for any $\vx:=(x(0),\ldots, x(N-1))\in \C^{N}$, if the $Z$-transform $X(z):=\sum_{k=0}^{N-1} x(k) z^{-k}$ of $\vx$ satisfying $|X(z)|^2=R(z)$ for all $z\in \C$ on the unit circle, then we have
\begin{equation} \label{eq:ztranx}
X(z)=z^{-N+1} \cdot e^{i\alpha} \cdot |r(N-1)|^{\frac12}\cdot \prod_{n=1}^{N-1} |\beta_n|^{-\frac12} \xkh{z-\beta_n},
\end{equation}
where $\alpha \in [-\pi, \pi)$ and the roots $\beta_n\in \dkh{\gamma_n, \wb{\gamma_n}^{-1}}$.
\end{lemma}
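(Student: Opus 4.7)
The plan is to exploit the reciprocal-conjugate symmetry forced by the Hermitian condition $r(-n)=\overline{r(n)}$ together with the non-negativity of $R$ on the unit circle, via the standard device of clearing negative powers by multiplying through by $z^{N-1}$. First I would set $Q(z):=z^{N-1}R(z)=\sum_{n=-N+1}^{N-1}r(n)\,z^{N-1-n}$, which is an honest polynomial of degree $2N-2$. The Hermitian symmetry of the coefficients is equivalent to the functional identity $Q(z)=z^{2N-2}\,\overline{Q(1/\overline{z})}$, and this identity forces the roots of $Q$ to come in reciprocal-conjugate pairs $\{\gamma,\,1/\overline{\gamma}\}$. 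Since $r(N-1)\neq 0$, one has $Q(0)\neq 0$, so no root equals $0$. Finally, $R(z)\ge 0$ on the unit circle implies that any root on the unit circle must have even multiplicity (otherwise $R$ would change sign as $z$ crosses it), which lets the $2(N-1)$ roots be distributed into $N-1$ pairs $\{\gamma_n,\,1/\overline{\gamma_n}\}_{n=1}^{N-1}$; matching the appropriate coefficient then yields (\ref{eq:RZ}).

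For the second part, given any candidate $\vx$ with $|X(z)|^2=R(z)$ on $|z|=1$, I would introduce the polynomial $p(z):=z^{N-1}X(z)$ of degree at most $N-1$ and its reciprocal conjugate $p^\star(z):=z^{N-1}\overline{p(1/\overline{z})}=\sum_{k=0}^{N-1}\overline{x(k)}\,z^{k}$. Using $\overline{z}=1/z$ on $|z|=1$, a short calculation shows $p(z)\,p^\star(z)=z^{N-1}|X(z)|^{2}=Q(z)$ on the unit circle, and since both sides are polynomials agreeing on infinitely many points the identity extends to all $z\in\C$. Because the roots of $p^\star$ are precisely the reciprocal conjugates of the roots of $p$, the multiset of $2(N-1)$ roots of $p\cdot p^\star$ --- which by Step~1 is $\{\gamma_n,\,1/\overline{\gamma_n}\}_{n=1}^{N-1}$ --- forces $p$ to pick up exactly one element from each pair, i.e.\ $p(z)=c\prod_{n=1}^{N-1}(z-\beta_n)$ with $\beta_n\in\{\gamma_n,\,1/\overline{\gamma_n}\}$. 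Evaluating $p\cdot p^\star=Q$ at $z=0$, together with $p^\star(0)=\overline{c}$ and $p(0)=c(-1)^{N-1}\prod_n\beta_n$, gives $|c|^{2}\prod_n|\beta_n|=|r(N-1)|$, pinning down $|c|=|r(N-1)|^{1/2}\prod_n|\beta_n|^{-1/2}$, while the overall phase of $c$ is the single free unimodular factor $e^{i\alpha}$. Substituting $X(z)=z^{-(N-1)}p(z)$ then produces (\ref{eq:ztranx}).

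The main delicate point I anticipate is the bookkeeping of roots on the unit circle and of repeated roots more generally. One must verify that the even-multiplicity property forced by $R\ge 0$ allows on-circle roots to be partitioned exactly half-and-half between $p$ and $p^\star$, and that repeated off-circle pairs still admit a consistent one-per-pair assignment; this is where the argument has to be run at the level of multisets of roots rather than sets. Once this combinatorial distribution is in place, the rest --- matching a single magnitude by a constant-term comparison and absorbing an unimodular phase --- is routine.
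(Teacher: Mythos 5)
Your proposal is correct, and for the factorization \eqref{eq:RZ} it is essentially identical to the paper's argument: both clear denominators via $S(z)=z^{N-1}R(z)$, derive the reciprocal-conjugate pairing of roots (you via the functional identity $S(z)=z^{2N-2}\overline{S(1/\overline z)}$, the paper by the equivalent direct computation $S(\overline{\gamma}^{-1})=\overline{\gamma}^{-2N+2}\overline{S(\gamma)}$), and handle on-circle roots by the even-multiplicity observation. For the second half your route differs in execution and is, if anything, tighter than the paper's. The paper manipulates moduli on the unit circle, shows $|(z-\gamma_n)(z-\overline{\gamma_n}^{-1})|=|\gamma_n|^{-1}|z-\gamma_n|^2$, and then asserts that \eqref{eq:ztranx} follows "immediately" before extending by analyticity; the selection of one root per pair and the treatment of multiplicities are left implicit. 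You instead prove the genuine polynomial identity $p(z)\,p^\star(z)=S(z)$ with $p^\star(z)=z^{N-1}\overline{p(1/\overline z)}$, extend it off the circle by the identity theorem, and then read off the root assignment at the level of multisets, noting correctly that the one-per-pair selection must be checked against repeated roots and roots on the circle (where the involution $\beta\mapsto 1/\overline\beta$ has fixed points); your constant-term evaluation $|c|^2\prod_n|\beta_n|=|r(N-1)|$ then pins down the modulus of the leading coefficient exactly as needed. This buys a fully rigorous version of the step the paper glosses over, at the cost of the extra multiset bookkeeping you flag; both arguments are instances of the same spectral-factorization idea and reach the same conclusion.
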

\begin{proof}
Set $S(z):=z^{N-1} R(z)$. Then $S(z)$ is a complex algebraic polynomial of degree $2N-2$, namely,
\[
S(z)=r(0) z^{N-1} +\sum_{n=1}^{N-1} \wb{r(n)} z^{N+n-1} +\sum_{n=1}^{N-1} r(n) z^{N-n-1}.
\]
Assume that $S(\gamma_n)=0$, namely, $\gamma_n$ is a root of  $S(z)$.   The root $\gamma_n\neq 0$ due to the constant term $\wb{r(N-1)}\neq 0$.
We claim that $S(\wb{\gamma_n}^{-1})=0$, i.e.,  $\wb{\gamma_n}^{-1}$ is also a root of $S(z)$. Indeed,
a simple calculation shows that
\begin{eqnarray*}
S( \wb{\gamma_n}^{-1}) &= & r(0) \wb{\gamma_n}^{-N+1} +\sum_{n=1}^{N-1} \wb{r(n)} \wb{\gamma_n}^{-N+1-n} +\sum_{n=1}^{N-1} r(n) \wb{\gamma_n}^{-N+1+n}\\
&=& \wb{\gamma_n}^{-2N+2 }\xkh{ \wb{r(0) } \wb{\gamma_n}^{N-1} +\sum_{n=1}^{N-1} \wb{r(n)} \wb{\gamma_n}^{N-1-n} +\sum_{n=1}^{N-1} r(n) \wb{\gamma_n}^{N-1+n} }\\
&=& \wb{\gamma_n}^{-2N+2 } \wb{S(\gamma_n)}=0,
\end{eqnarray*}
where we use $S(\gamma_n)=0$ in  the last equality. This gives the claim.  Therefore, all roots of $S(z)$ not lying on the unit circle occur in pairs $\xkh{\gamma_n, \wb{\gamma_n}^{-1}}$. For the case where the root $\gamma_n$ lying on the unit circle, the multiplicity must be even due to fact that  $R(z) \ge 0$ for all $z\in \C$ on the unit circle. Hence,  $R(z)$ can be written as
\[
R(z)=z^{-N+1}\cdot  r(N-1) \prod_{n=1}^{N-1} (z-\gamma_n) (z- \wb{\gamma_n}^{-1})
\]
for some $\gamma_n \neq 0, n=1,\ldots, N-1$.

Furthermore,  for any $z\in \C$ on the unit circle, we have
\begin{eqnarray*}
|(z-\gamma_n) (z- \wb{\gamma_n}^{-1})| &=& |\wb{z}| \cdot |(z-\gamma_n) (z- \wb{\gamma_n}^{-1})| \\
&=& |z-\gamma_n|\cdot  |\wb{\gamma_n}^{-1} \wb{z}-1| \\
&=&  |{\gamma_n}|^{-1}\cdot |z-\gamma_n|^2.
\end{eqnarray*}
Similarly, we can obtain $|(z-\gamma_n) (z- \wb{\gamma_n}^{-1})| = |\wb{\gamma_n}|  \cdot |z-\wb{\gamma_n}^{-1}|^2$ for $z$ on the unit circle.  Recall that $|X(z)|^2=R(z)$ and $R(z)\ge 0$ for  all $z\in \C$ on the unit circle. It then gives $|X(z)|=\sqrt{|R(z)|}$ for $z$ on the unit circle.  Combining with the fact that  $z^{-N+1} X(z)$ is a complex algebraic polynomial of degree $N-1$, we immediately have
\begin{equation} \label{eq:xz}
X(z)=z^{-N+1} \cdot e^{i\alpha} |r(N-1)|^{\frac12} \prod_{n=1}^{N-1} |\beta_n|^{-\frac12} \xkh{z-\beta_n}
\end{equation}
for $z$ on the unit circle, where $\beta_n\in \dkh{\gamma_n, \wb{\gamma_n}^{-1}}$.

Finally, noting that $z^{-N+1} X(z)$ is analytic, we obtain that  \eqref{eq:xz} holds for all $z \in \C$, which completes the proof.

\end{proof}

Lemma \ref{le:nontricha} demonstrates that the main problem in the recovery of $\vx$ from its Fourier imagnitudes is  how to select the zeros $\beta_n$ from the zero pairs $\dkh{\gamma_n, \wb{\gamma_n}^{-1}}$ to form $X(z)$.
Note that there are $2^{N-1}$ possible choices, which leads to $2^{N-2}$ possible nontrivial solutions of the considered phase retrieval problem, after removing the trivial conjugate reflection ambiguity.


In order to evaluate a meaningful solution of the Fourier phase retrieval, one needs to pose appropriate prior conditions to enforce uniqueness of solutions.  One way to achieve this goal is to use additionally known values of some entries, such as the value of the first entry $x(0)$.
If we know the value $x(0)$ it then follows from \eqref{eq:auror} that  the value $x(N-1)$ can be obtained by
\[
x(N-1)=\frac{r(N-1)}{\wb{x(0)}}.
\]
Noting that $z^{N-1}X(z)=\sum_{k=0}^{N-1} x(k)z^{N-1-k}$ and $X(\beta_n)=0, n=1,\ldots,N-1$,
 we obtain that
\begin{equation} \label{eq:prodcon}
\prod_{j=1}^{N-1} (-\beta_j) =\frac{x(N-1)}{x(0)}=\frac{r(N-1)}{|x(0)|^2}.
\end{equation}
Actually, it has been proved that the constrain \eqref{eq:prodcon} can ensure the uniqueness of solutions for almost all signal $\vx \in \C^N$ \cite{bendory2017fourier,beinert2015ambiguities}.

\subsection{Complexity Theory and Product Partition Problem}
In order to prove our main result, we need the famous result in combinatorics that  Product Partition Problem is NP-complete in the strong sense \cite{garey1979computers,ng2010product,kovalyov2010generic}.   Product Partition problem can be formulated as follows.
\begin{enumerate}
\item[] {\bf Product Partition}: Given $n$ positive integer number $u_1,\ldots,u_n$,  is there a subset $\Gamma \subset \dkh{1,\ldots,n}$ such that $\prod_{j\in \Gamma} u_j=\prod_{j \in \Gamma^c} u_j $?
\end{enumerate}

Throughout the paper, we use the basic terminology  in computational complexity theory \cite{garey1978strong}. For instance, the complexity class $\mathcal{ P}$ is the set of problems which can be solved in an amount of time that is bounded by a polynomial of the binary coding of the input.
A pseudo-polynomial time algorithm is an algorithm that runs in time bounded by a polynomial of the number of input and the magnitude of the largest number of the input.
 If a problem is NP-complete, then it does not have a polynomial time algorithm, unless $\mathcal{ P}=\mathcal{NP}$. Similarly, if a problem is NP-complete in the strong sense then it does not have a pseudo-polynomial time algorithm, unless $\mathcal{ P}=\mathcal{NP}$.  It is widely believed  $\mathcal{ P} \neq \mathcal{NP}$ \cite{fortnow2009status}, but the proof is still open.

%
%
%

\section{Proof of Theorem \ref{th:main}}

The aim of this section is to prove Theorem \ref{th:main} showing that there is no linear algorithm for solving the Fourier phase retrieval problem. The idea of the proof is: if there is an algorithm with linear convergence rate for solving the Fourier phase retrieval problem, then we could propose another algorithm such that it can give an answer to Product Partition problem in  polynomial time.  This contradicts the famous result that Product Partition is NP-complete, which leads to our result.
To begin with, we need the following auxiliary lemma.

\begin{lemma} \label{le:key2}
Assume that $u_1,\ldots,u_{N} \ge 2$ are positive integers. Assume that  there is an index set $\Gamma \subset \dkh{1,\ldots, N-1}$ such that  $\prod_{k\in \Gamma} u_k =u_{N}  \cdot \prod_{k\in \Gamma^c} u_k $.
 Set $u_{\max}:=\max\{u_1,\ldots,u_{N-1}\}$.
 Let  $\vx:=(x(0),\ldots, x(N-1)) \in \C^N$ be a vector constructed according to the following two conditions:
\begin{itemize}
\item[(1)] The initial value $x(0)=u_{\max}^{N-1}$.
\item[(2)] The roots of the $Z$-transform of $\vx$ are
\begin{equation*}
\beta_k =\left\{\begin{array}{ll} -{u_k}, &\quad  \mbox{if} \quad k\in \Gamma\\
-\frac1{u_k}, &\quad  \mbox{if} \quad k\in \Gamma^c\\
\end{array} \right.,
\end{equation*}
where the $Z$-transform of $\vx$ is defined  as
\begin{equation*}
X(z):=\sum_{k=0}^{N-1} x(k) z^{-k}=x(0)z^{-(N-1)}(z-\beta_1)\cdots (z-\beta_{N-1}).
\end{equation*}
\end{itemize}
Assume that $\vx_m \in \C^N$ is a vector  which satisfies  $ \norm{\vx_m-\vx} \le  u_{\max}^{-2N}$.
We use  $X_m(z)$ to denote  the $Z$-transform of $\vx_m$.
Then  the followings  hold:
\begin{itemize}
\item[(i)] If $1/\beta_k$ is not a zero of $X(z)$ then
\[
\abs{ X_m(1/\beta_k)} \ge \abs{ X_m(\beta_k)}+c_0
 \]
 holds for a constant $c_0:=1-2u_{\max}^{-N}$.
 \item[(ii)] If $1/\beta_k$ is a zero of $X(z)$ then
 \[
 \max\dkh{\abs{X_m(\beta_k)},\abs{ X_m(1/\beta_k)} } \le u_{\max}^{-N}.
 \]
\end{itemize}

\end{lemma}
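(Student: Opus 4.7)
The proof rests on a single perturbation bound combined with a lower bound on $|X(1/\beta_k)|$. Since $X(\beta_k)=0$, the value $|X_m(\beta_k)|$ must be small; in case (ii) the same reasoning applies at $1/\beta_k$, giving the result immediately. In case (i), I would show $|X(1/\beta_k)|\ge 1$, and combine this with the same perturbation bound to conclude.

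\textbf{Step 1: the perturbation bound at $\beta_k$ and $1/\beta_k$.} For any $z\in\C$,
\begin{equation*}
|X_m(z)-X(z)| \;=\; \Bigl|\sum_{j=0}^{N-1}(x_m(j)-x(j))z^{-j}\Bigr| \;\le\; \|\vx_m-\vx\|\cdot\Bigl(\sum_{j=0}^{N-1}|z|^{-2j}\Bigr)^{1/2}
\end{equation*}
by Cauchy--Schwarz. At $z\in\{\beta_k,1/\beta_k\}$ we have $|z|^{-1}\in\{u_k,1/u_k\}\le u_{\max}$, so $\sum_{j=0}^{N-1}|z|^{-2j}\le u_{\max}^{2N}/(u_{\max}^2-1)\le u_{\max}^{2N}$. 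Using $\|\vx_m-\vx\|\le u_{\max}^{-2N}$ this gives $|X_m(z)-X(z)|\le u_{\max}^{-N}$ at both points. Since $X(\beta_k)=0$, this already yields $|X_m(\beta_k)|\le u_{\max}^{-N}$, and in case (ii) the same inequality at $z=1/\beta_k$ (where $X(1/\beta_k)=0$) proves the claim.

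\textbf{Step 2: lower bound on $|X(1/\beta_k)|$ for case (i).} Using $X(z)=x(0)z^{-(N-1)}\prod_{j=1}^{N-1}(z-\beta_j)$, a direct manipulation gives
\begin{equation*}
X(1/\beta_k) \;=\; x(0)\prod_{j=1}^{N-1}(1-\beta_j\beta_k) \;=\; u_{\max}^{N-1}\prod_{j=1}^{N-1}(1-\beta_j\beta_k).
\end{equation*}
I would bound each factor by splitting on whether $k,j$ lie in $\Gamma$ or $\Gamma^c$. If $k\in\Gamma$ (so $\beta_k=-u_k$), factors with $j\in\Gamma$ satisfy $|1-u_ju_k|\ge 3$ since $u_ju_k\ge 4$, while factors with $j\in\Gamma^c$ (where $\beta_j=-1/u_j$) equal $(u_j-u_k)/u_j$; the hypothesis that $1/\beta_k=-1/u_k$ is not a root of $X(z)$ forces $u_j\neq u_k$ for every $j\in\Gamma^c$, hence $|1-u_k/u_j|\ge 1/u_{\max}$. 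The product is then at least $3^{|\Gamma|}u_{\max}^{-|\Gamma^c|}$, and multiplying by $u_{\max}^{N-1}=u_{\max}^{|\Gamma|+|\Gamma^c|}$ yields $|X(1/\beta_k)|\ge (3u_{\max})^{|\Gamma|}\ge 1$. The symmetric case $k\in\Gamma^c$ is analogous: factors indexed by $j\in\Gamma$ are at least $1/u_{\max}$ (again by the non-zero hypothesis, which forces $u_j\neq u_k$), factors indexed by $j\in\Gamma^c$ are at least $1-1/(u_ju_k)\ge 3/4$, and the product multiplied by $u_{\max}^{N-1}$ is at least $(3u_{\max}/4)^{|\Gamma^c|}\ge 1$. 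Combining this with the perturbation bound from Step 1 yields $|X_m(1/\beta_k)|\ge 1-u_{\max}^{-N}$, so $|X_m(1/\beta_k)|-|X_m(\beta_k)|\ge 1-2u_{\max}^{-N}=c_0$.

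\textbf{Main obstacle.} Step 1 is routine; the real work is the case analysis in Step 2. The delicate point is that the factors of the form $(u_j-u_k)/u_j$ can be as small as $1/u_{\max}$, and one needs to verify that their combined contribution is exactly compensated by the prefactor $u_{\max}^{N-1}=x(0)$. The integrality of the $u_k$ (forcing $|u_j-u_k|\ge 1$ whenever they differ) together with the explicit hypothesis that $1/\beta_k$ is not a root of $X(z)$ are both essential here; without either, no constant lower bound on $|X(1/\beta_k)|$ would be available.
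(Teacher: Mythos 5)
Your proposal is correct and follows essentially the same route as the paper: a Cauchy--Schwarz perturbation bound giving $|X_m(z)-X(z)|\le u_{\max}^{-N}$ at $z\in\{\beta_k,1/\beta_k\}$, an explicit lower bound on $|X(1/\beta_k)|$ from the factored form of $X(z)$, and the triangle inequality, with the same case split $k\in\Gamma$ versus $k\in\Gamma^c$. The only noteworthy divergence is that your factor-by-factor estimates ($|1-u_ju_k|\ge 3$ for $j\in\Gamma$, $|1-u_k/u_j|\ge 1/u_{\max}$ for $j\in\Gamma^c$, and their analogues in the other case) yield $|X(1/\beta_k)|\ge 1$ without invoking the Product Partition identity $\prod_{k\in\Gamma}u_k=u_N\prod_{k\in\Gamma^c}u_k$, which the paper uses in the case $k\in\Gamma$ only to reach the stronger (and unneeded) bound $2^N$.
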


\begin{proof}
  To prove part (i),  we need to present an upper bound for $\abs{ X_m(\beta_k)}$ and a lower bound for $\abs{ X_m(1/\beta_k)} $.  Consider the following two cases:

{\bf Case 1:} $k\in \Gamma$,  namely, $\beta_k=-{u_k}$ is the zero of $X(z)$. Recall that
\[
X(z)=\sum_{n=0}^{N-1} x(n) z^{-n} \quad \mbox{and} \quad X_m(z)=\sum_{n=0}^{N-1} x_m(n) z^{-n}.
\]
Since $X(-u_k)=0$,  the upper bound for $\abs{ X_m(\beta_k)}$ can be given as
\begin{equation}\label{eq:lin1}
\begin{aligned}
\abs{ X_m(\beta_k)}:=\abs{X_m(-{u_k}) } & =  \abs{X_m(-{u_k}) -X(-{u_k}) }  \\
&= \abs{ \sum_{n=0}^{N-1} (-1)^n u_k^{-n } \xkh{ x_m(n)-x(n) } }  \\
&\le  \norm{\vx_m-\vx} \cdot \sqrt{\sum_{n=0}^{N-1} u_k^{-2n } }   \\
&\le  \norm{\vx_m-\vx} \cdot \sqrt{ \frac1{1-1/u_k^2} }   \\
&\le  2 \norm{\vx_m-\vx} \\
&\le   u_{\max}^{-N},
\end{aligned}
\end{equation}
where the first inequality follows from the Cauchy-Schwarz inequality,  the third inequality follows from the fact that  $u_k\ge 2$ are positive integers, and the last inequality follows from the assumption of
$\norm{\vx_m-\vx} \le u_{\max}^{-2N}$.

The lower bound for $\abs{ X_m(1/\beta_k)} $ can be formulated as
\begin{eqnarray} \label{eq:lintri}
\abs{ X_m(1/\beta_k)} :=\Big |X_m(-\frac1 {u_k})\Big|  \ge \Big| X(-\frac1 {u_k})\Big| - \Big| X_m(-\frac1 {u_k})-X(-\frac1 {u_k})\Big|.
\end{eqnarray}
For the first term $|X(-\frac1 {u_k}) |$, recall  that $X(z)=x(0)z^{-(N-1)}(z-\beta_1)\cdots (z-\beta_{N-1})$ and the roots of $X(z)$ are
\begin{equation*}
\beta_k =\left\{\begin{array}{ll} -{u_k}, &\quad  \mbox{if} \quad k\in \Gamma\\
-\frac1{u_k}, &\quad  \mbox{if} \quad k\in \Gamma^c\\
\end{array} \right..
\end{equation*}
Therefore,   $u_j\neq u_k$  for any $j\in \Gamma^c$ since $k\in \Gamma$ and $1/u_k$ is not a zero of $X(z)$. Observe that the initial value $x(0)=u_{\max}^{N-1}$. Thus,  we have
\begin{equation} \label{eq:lin2}
\begin{aligned}
\Big| X(-\frac1 {u_k}) \Big| & =  u_{\max}^{N-1} u_k^{N-1} \prod_{j\in\Gamma} \abs{{u_j} - u_k^{-1}} \prod_{j\in\Gamma^c} \abs{u_j^{-1} -  {u_k}^{-1}}  \\
& =  u_{\max}^{N-1} \prod_{j\in\Gamma} \abs{u_j u_k - 1} \prod_{j\in\Gamma^c} \abs{\frac{u_k}{u_j} -  1}  \\
&\ge u_{\max}^{N-1}  \prod_{j\in\Gamma} u_j \prod_{j\in\Gamma^c} u_j^{-1}  \\
&= u_{\max}^{N-1} u_N   \\
&\ge 2^N,
\end{aligned}
\end{equation}
where the first inequality follows from the fact that $u_j \ge 2$ are positive integers and $u_j\neq u_k$   for any $j\in \Gamma^c$, and the last equality follows from  $\prod_{t\in \Gamma} u_t =u_{N} \prod_{t\in \Gamma^c} u_t $.

For the second term of \eqref{eq:lintri}, we have
\begin{equation} \label{eq:lin3}
\begin{aligned}
\abs{X_m(-\frac{1}{u_k})-X(-\frac{1}{u_k})}  = & \abs{ \sum_{n=0}^{N-1} (-u_k)^n \xkh{ x_m(n)-x(n) } }  \\
\le & \norm{\vx_m-\vx} \cdot \sqrt{\sum_{n=0}^{N-1} u_k^{2n} }  \\
\le &  \norm{\vx_m-\vx}  \cdot \sqrt{\frac{u_k^{2N}-1}{u_k^2-1}}    \\
\le &   \norm{\vx_m-\vx} \cdot u_k^N   \\
\le &   u_{\max}^{-N},
\end{aligned}
\end{equation}
where the last inequality follows from the condition  $\norm{\vx_m-\vx} \le u_{\max}^{-2N}$.
Putting \eqref{eq:lin1}, \eqref{eq:lintri}, \eqref{eq:lin2} and \eqref{eq:lin3} together, we immediately  obtain
\[
\abs{X_m(1/\beta_k) } \ge  2^N-u_{\max}^{-N} \geq \abs{X_m(\beta_k) } +2^N -2u_{\max}^{-N}
\]
for $k\in \Gamma$. Here, the first inequality is obtained by \eqref{eq:lintri}, \eqref{eq:lin2} and \eqref{eq:lin3} and
the second inequality follows from \eqref{eq:lin1}.

{\bf Case 2:} $k\in \Gamma^c$,  namely, $\beta_k=-u_k^{-1}$ is the zero of $X(z)$.
For this case, we have
\begin{equation}\label{eq:case21}
\abs{X_m(\beta_k)}=\abs{X_m(-u_k^{-1}) }  =  \abs{X_m(-u_k^{-1}) -X(-u_k^{-1}) }  \le  u_{\max}^N\norm{\vx_m-\vx} \le  u_{\max}^{-N}
\end{equation}
due to the fact $\norm{\vx_m-\vx} \le  u_{\max}^{-2N}$. For the term $\abs{X_m(1/\beta_k) }$, we have
\begin{equation} \label{eq:case22}
\abs{X_m(1/\beta_k) }= \abs{X_m(-{u_k}) } \ge \abs{X(-{u_k})}- \abs{X_m(-{u_k})-X(-{u_k})}.
\end{equation}
A simple calculation shows that
\begin{equation}\label{eq:case23}
\begin{aligned}
\abs{X(-{u_k}) } &=  u_{\max}^{N-1} u_k^{-(N-1)} \prod_{j\in\Gamma} \abs{u_k-u_j} \prod_{j\in\Gamma^c} \abs{u_k- u_j^{-1}} \\
&\ge  \xkh{\frac{u_{\max}}{u_k}}^{N-1}\\
&\ge 1,
\end{aligned}
\end{equation}
where the first inequality comes from the fact $u_j$ are positive integers and $u_j\neq u_k$  for $j\in \Gamma$ due to  $k\in \Gamma^c$ and $1/u_k$ is not a zero of $X(z)$.   Another term in (\ref{eq:case22}) can be bounded by
\begin{equation}\label{eq:case24}
 \abs{X_m(-{u_k}) -X(-{u_k}) }  \le    2 \norm{\vx_m-\vx} \le  u_{\max}^{-N}.
\end{equation}
Combining the results above, we obtain that
\[
\abs{ X_m(1/\beta_k)} \overset{(\ref{eq:case22})}\ge
\abs{X(-{u_k})}- \abs{X_m(-{u_k})-X(-{u_k})} \overset{(\ref{eq:case23}), (\ref{eq:case24})}\geq 1-u_{\max}^{-N} \overset{(\ref{eq:case21})}\ge
\abs{ X_m(\beta_k)}+1-2u_{\max}^{-N}
\]
for $k\in \Gamma^c$.

In summary, for all $k=1,\ldots, N-1$, if $1/\beta_k$ is not a zero of $X(z)$ then
\[
\abs{ X_m(1/\beta_k)} \ge \abs{ X_m(\beta_k)}+c_0 .
 \]
Here,  $c_0:=1-2u_{\max}^{-N}$.

For part (ii) where both $\beta_k$ and $1/\beta_k$ are zeros of $X(z)$, we can obtain the conclusion by  the inequalities \eqref{eq:lin1} and \eqref{eq:lin3}.

\end{proof}

Now, we are ready to prove the main result of this paper .

\begin{proof}[Proof of  Theorem \ref{th:main}]
Our main idea is to  show that if there is a linear algorithm for solving the Fourier phase retrieval problem, then Product Partition problem can be solved in polynomial time.  This contradicts the result that  Product Partition is NP-complete in the strong sense.  To do this, recall that the Fourier phase retrieval problem that we consider is
\begin{equation}\label{eq:loss22}
\vx \in \argmin{\vz=(z(0),\ldots,z(N-1))\in \C^N}  \ell(|{{\z}}(\omega)|, R(\omega))\quad {\rm s.t.}\quad z(0)=x_0,
\end{equation}
where  $x_0\in \C$ is the given initial value and
\begin{equation} \label{eq:Romega}
R(\omega)=e^{-i \omega(N+1)}\cdot  r(N-1) \prod_{n=1}^{N-1} (e^{i\omega }-\gamma_n) (e^{i\omega }- \wb{\gamma_n}^{-1})
\end{equation}
for some $r(N-1) \neq 0$ and  $\gamma_n \neq 0, n=1,\ldots, N-1$.   Here, $\ell(|{{\z}}(\omega)|, R(\omega))$ is a nonnegative loss function which vanishes only when $\abs{{\z}(\omega)}^2= R(\omega) $. Assume  that $\mathcal A:=\mathcal A(R(\omega), x(0))$ is the linear algorithm  which can output an estimator  $\vx_m\in \C^N$ satisfying
\begin{equation} \label{eq;quad}
\norm{\vx_m-\vx} \le \epsilon
\end{equation}
 in $\mbox{Poly}(N) \log(1/\epsilon)$  time.
 According to (\ref{eq:Romega}),
we could assume that the inputs of $\mathcal A$ are the values of $r(N-1)$ and  $x(0)$, and the zero pairs   $\dkh{(\gamma_1,\wb{\gamma_1}^{-1}),\ldots, (\gamma_{N-1}, \wb{\gamma_{N-1}}^{-1})}$.

For any Product Partition problem with positive integers $u_1,\ldots,u_N \ge 2$,  without loss of generality, we assume $u_{\max}:=\max\{u_1,\ldots,u_N\} \ge 3$. We claim that  Algorithm \ref{alg2} could give an answer to this product partition problem in polynomial time.  This is a contradiction, which leads to our conclusion.

It remains to prove the claim.
We first consider the case where  the Product Partition problem has a solution $\Gamma \subset \dkh{1,\ldots,N-1}$ such that
\begin{equation} \label{eq:orpart22}
\prod_{k\in \Gamma} u_k =u_{N} \prod_{k\in \Gamma^c} u_k.
\end{equation}
We next show that Algorithm \ref{alg2} can output the set $\Gamma$ in polynomial time.
Construct a target  signal $\vx:=(x(0),\ldots, x(N-1)) \in \C^N$ with  $x(0)=x_0:=u_{\max}^{N-1}$ and the  roots of the $Z$-transform of $\vx$ being
\begin{equation} \label{eq:zerosbeta22}
\beta_k =\left\{\begin{array}{ll} -{u_k}, &\quad  \mbox{if} \quad k\in \Gamma\\
{-u_k^{-1}}, &\quad  \mbox{if} \quad k\in \Gamma^c\\
\end{array} \right.,
\end{equation}
where the $Z$-transform  of $\vx$ is defined as
\begin{equation} \label{eq:ztranx222}
X(z):=\sum_{k=0}^{N-1} x(k) z^{-k}=x(0)z^{-(N-1)}(z-\beta_1)\cdots (z-\beta_{N-1}).
\end{equation}
\IncMargin{1em}
\begin{algorithm}[H]
 \SetAlgoLined
 \SetKwData{Return}{return}
\SetKwInOut{Input}{Input}\SetKwInOut{Output}{Output~}
	\caption{Solving Product Partition problem based on Fourier phase retrieval algorithm $\mathcal A$}
	\label{alg2}
\Input{Positive integers $u_1, u_2,\ldots, u_N$}
\BlankLine
    Initialization: $U:=\left\{ u_1,\ldots,u_{N-1} \right\}$; Boolean variable $b=\textbf{TRUE}$; Initial value $x(0)=u_{\max}^{N-1}$ with $u_{\max}:=\max\xkh{u_1,\ldots,u_{N-1}}$ \;
		\While{$b=\textbf{TRUE}$ and $U\neq \emptyset$ } {
		Define two sets $\Gamma_1=\emptyset, \Gamma_2=\emptyset$ \;
		 Rewrite $U:=\dkh{u_{k_1},\ldots, u_{k_{N-1}}}$ and then compute $\vx_m \in \C^N$ by the algorithm $\mathcal A$ based on initial value  $x(0)$, $r(N-1):=|x(0)|^2 u_{N}$, and  the zero pairs $\dkh{(-{u_{k_1}},{-u_{k_1}^{-1}}),\ldots, (-{u_{k_{N-1}}},-u_{k_{N-1}}^{-1})}$.  Here,  $\vx_m$ is the output of the algorithm $\mathcal A$ after $m=\mbox{Poly}(N)\log u_{\max}$ steps\;
		\For{ $j=1,\ldots, N-1$ } {
		 Compute $X_m(-{u_{k_j}})$ and $X_m({-u_{k_j}^{-1}})$. Here, $X_m(z)$ is the $Z$-transform of the iteration $\vx_m$ \;
		  \eIf {$\max\xkh{ X_m(-{u_{k_j}}), X_m({-u_{k_j}^{-1}})} \le 1/4$  } {
\tcc{both $-{u_{k_j}}$ and ${-u_{k_j}^{-1}}$  are roots of $X(z)$.}
\eIf{there exits $k_\ell$ such that $u_{k_\ell}=u_{k_j}$ and $k_\ell\neq k_j$} {
		 $U\leftarrow U \setminus \dkh{u_{k_\ell}, u_{k_j}}$, $N \leftarrow |U|+1$, initial value $x(0)\leftarrow u_{\max}^{N-1}$\;
		 {\bf break} and  {\bf go} Step 2 \;		
}{
 \tcc {For the case where both $-{u_{k_j}}$ and ${-u_{k_j}^{-1}}$  are roots of $X(z)$, if the Product Partition has a solution, then there muse be $k_\ell$ such that $u_{k_\ell}=u_{k_j}$ and $k_\ell\neq k_j$.}
{\bf return} { ``The product partition problem has a solution''}
}
}{
     \eIf  {$\abs{ X_m(-u_{k_j}^{-1})} \ge \abs{ X_m(-u_{k_j})}+ 3/4$} {
     $\Gamma_1\leftarrow \Gamma_1\cup \dkh{k_j}$ \;
}{
     $\Gamma_2\leftarrow \Gamma_2\cup \dkh{k_j}$ \;
}
}
}
 $b=\textbf{FALSE}$ \;
		}
	  Compute $\mbox{quot}= \frac{\prod_{j\in \Gamma_1} u_j }{ \prod_{j\in \Gamma_2} u_j}$ \;
	  \BlankLine
	  \Output {If $U= \emptyset$ or quot$\neq u_N$ {\bf return} ``The product partition problem has no solution'' ;
otherwise {\bf return}  ``The product partition problem has a solution $\Gamma:=\Gamma_1$'' }
\end{algorithm} \DecMargin{1em}
\newpage

For this signal $\vx$,  we apply the  algorithm $\mathcal A$  to recover it from the  initial value  $x(0):=u_{\max}^{N-1}$, the zero pairs $\dkh{(\gamma_1,\wb{\gamma_1}^{-1}),\ldots, (\gamma_{N-1}, \wb{\gamma_{N-1}}^{-1})}:=\dkh{(-{u_{1}},{-u_{1}^{-1}}),\ldots, (-{u_{{N-1}}},-u_{{N-1}}^{-1})}$,  and $r(N-1):=|x(0)|^2 u_{N}$, i.e., taking
  \begin{equation*}
R(\omega):=e^{-i \omega(N+1)}\cdot  |x(0)|^2\cdot u_{N}\cdot  \prod_{n=1}^{N-1} (e^{i\omega }+u_n) (e^{i\omega }+{u_n}^{-1}).
\end{equation*}
  It then follows from \eqref{eq;quad} that the iteration $\vx_m$ given by $\mathcal A$ after $m= \mbox{Poly}(N) \log u_{\max}$ steps
 obeys
\begin{equation}\label{eq:sufficlose22}
\norm{\vx_m-\vx} \le  u_{\max}^{-2N}.
\end{equation}

Next, we divide the proof into the following two cases:

{\bf Case 1:} $1/\beta_k$ are not  zeros of $X(z)$ for all $k=1,\ldots, N-1$. From Lemma \ref{le:key2} and \eqref{eq:sufficlose22}, we obtain that
\begin{equation}\label{eq:Xmcase1}
\abs{ X_m(1/\beta_k)} \ge \abs{ X_m(\beta_k)}+1-2u_{\max}^{-N} \ge \abs{ X_m(\beta_k)}+ \frac34
 \end{equation}
holds for all $k=1,\ldots, N-1$, provided $m= \mbox{Poly}(N) \log u_{\max}$. Here, $X_m(z)$ is the $Z$-transform of iteration $\vx_m$.
This gives a rule to determine which zero should be selected from $\dkh{-{u_{k}},{-u_{k}^{-1}}}$ to form  $\beta_{k}$ for all $k=1,\ldots, N-1$
(see Line 15 in Algorithm \ref{alg2}). Moreover, if  $\beta_{k}=-{u_{k}}$ then $k\in \Gamma$, otherwise $k\in \Gamma^c$.  By the constrain \eqref{eq:prodcon}, we immediately obtain
\[
\mbox{quot}:=\frac{\prod_{k\in \Gamma} u_k}{\prod_{k\in \Gamma^c} u_k} = \frac{r(N-1)}{|x(0)|^2}  =u_N.
\]

{\bf Case 2:} There is an index $k_j$ such that $1/\beta_{k_j}$ is a zero of $X(z)$. Using Lemma \ref{le:key2}, we have that $\max\xkh{ X_m({-u_{k_j}}), X_m({-u_{k_j}^{-1}})} \le  u_{\max}^{-N} \le  1/4$ due to $u_{\max}\ge 3$.
On the other hand, if $\max\xkh{ X_m({-u_{k}}), X_m({-u_{k}^{-1}})} \le  1/4$ then $X(1/\beta_k)=0$. Indeed, if $X(1/\beta_k)\neq 0$, then
we can use (\ref{eq:Xmcase1})  to obtain that $1/4\geq \abs{X_m(1/\beta_k)} \geq \abs{X_m(\beta_k)}+3/4$. A contradiction.
This gives a rule to determine which index belongs to this case where $X(\beta_k)=X(1/\beta_k)=0$ (See line 7 in Algorithm \ref{alg2}). Furthermore, from the construction of $\vx$ given in \eqref{eq:zerosbeta22}, we know that the case where both $\beta_{k_j}$ and $1/\beta_{k_j}$ are zeros of $X(z)$ happens only when there is an index $k_l$ such that $u_{k_l}=u_{k_j}$,  and $k_j \in \Gamma, k_l\in \Gamma^c$
(or $k_l \in \Gamma, k_j \in \Gamma^c$).  Thus,  from \eqref{eq:orpart22}, it holds
\[
\prod_{k\in \Gamma  \setminus \dkh{k_j}} u_k = u_{N} \cdot \prod_{k\in \Gamma^c \setminus \dkh{k_l}} u_k.
\]
Thus,  if the original product partition problem $\left\{ u_1,\ldots,u_{N} \right\}$  has a solution then the following product partition problem $\left\{ u_1,\ldots,u_{N} \right\}\setminus \dkh{{u_{k_l},u_{k_j}}}$ also has a solution. We can then apply Algorithm $\mathcal A$ to find this solution recursively.

Therefore, for the case where the product partition problem $ u_1,\ldots,u_{N}$ has a solution,  Algorithm \ref{alg2} could find this solution by applying the algorithm $\mathcal{A}$  at most $\lceil N/2 \rceil$ times,  and we have
\begin{equation} \label{eq:panduan22}
\mbox{quot}:= \frac{\prod_{k\in \Gamma_1} u_k}{ \prod_{k\in \Gamma_2} u_k }=u_N,
\end{equation}
where the subsets $\Gamma_1$ and $\Gamma_2$ are given by Algorithm \ref{alg2}.

We next turn to the case where the Product Partition problem $u_1,\ldots,u_N \ge 2$ does not have a solution.  For this case, by the definition,  the   subsets $\Gamma_1$ and $\Gamma_2$ given by Algorithm \ref{alg2} must obey
 \[
 \mbox{quot}= \frac{\prod_{j\in \Gamma_1} u_j }{ \prod_{j\in \Gamma_2} u_j} \neq u_N.
 \]

In summary,  by running Algorithm \ref{alg2},  if $\mbox{quot}:= \frac{\prod_{k\in \Gamma_1} u_k}{ \prod_{k\in \Gamma_2} u_k }=u_N$,  then the original product partition problem has a solution; otherwise, it does not have a solution.

 Finally, note that the algorithm $\mathcal A$ runs at most $\lceil N/2 \rceil$ times in Algorithm \ref{alg2},   and   $m=\mbox{Poly}(N) \log u_{\max}$ steps are executed for each time. Thus, the total steps of Algorithm \ref{alg2} is
\begin{equation} \label{eq:totalstep22}
\mbox{Total}:= \lceil N/2 \rceil \cdot \xkh{\mbox{Poly}(N) \log u_{\max}+O(N) } = \mbox{Poly}(N) \log u_{\max}.
\end{equation}
Observe that the bit-length of the input of Product Partition $ u_1,\ldots,u_{N}$ is  $L:=N\log u_{\max}$.  Therefore, the total steps given in \eqref{eq:totalstep22} is a polynomial function of $L$. This means  Algorithm \ref{alg2}  could give an answer to Product Partition  in polynomial time,  which completes the proof of the claim.
\end{proof}



\section{Discussions}
This paper presents a theoretical understanding of the algorithms for solving Fourier phase retrieval, showing that there is no linear algorithm to reconstruct a signal $\vx\in\C^N$ from its Fourier magnitude and the initial value $x(0)$. The result demonstrates that the fundamental gap between theory and algorithms for Fourier phase retrieval can not be reduced in general.

There are some interesting problems for future research. First, the arguments employed  in this paper does not preclude the existence of a sublinear algorithm for solving the Fourier phase retrieval problem. It is of practical interest to propose a sublinear algorithm. Secondly,  the theoretical understanding of algorithms for sparse Fourier phase retrieval is still limited, especially when the sparsity is greater than $O(N^{1/2})$. It is interesting to extend the result in this paper to sparse Fourier phase retrieval.


\end{document}